\newtheorem{fact}{Fact}
\newcommand{\C}{\mathcal{C}}
\newcommand{\G}{\mathbf{G}}
\newcommand{\A}{\mathbf{A}}
\newcommand{\F}{\mathbb{F}}
\newcommand{\RS}{\mathsf{RS}}
\renewcommand{\deg}[1]{\mathsf{deg}\left(#1\right)}
\newtheorem{thm}{Theorem}
\newtheorem{clry}{Corollary}
\newtheorem{lem}{Lemma} 
\begin{document}
\title{Balanced Reed--Solomon Codes}
\author{
Wael Halbawi\thanks{Department of Electrical Engineering - California Institute of Technology - Email: \texttt{\{whalbawi, hassibi\}@caltech.edu}} , 
Zihan Liu\thanks{Department of Information Engineering - The Chinese University of Hong Kong - Email: \texttt{1155029096@link.cuhk.edu.hk}} ,
Babak Hassibi\footnotemark[1]}
\maketitle

\begin{abstract}
We consider the problem of constructing linear Maximum Distance Separable (MDS) error-correcting codes with generator matrices that are sparsest and balanced. In this context, sparsest means that every row has the least possible number of non-zero entries, and balanced means that every column contains the same number of non-zero entries. Codes with this structure minimize the maximal computation time of computing any code symbol, a property that is appealing to systems where computational load-balancing is critical. The problem was studied before by Dau et al. where it was shown that there always exists an MDS code over a sufficiently large field such that its generator matrix is both sparsest and balanced. However, the construction is not explicit and more importantly, the resulting MDS codes do not lend themselves to efficient error correction. With an eye towards explicit constructions with efficient decoding, we show in this paper that the generator matrix of a cyclic Reed--Solomon code of length $n$ and dimension $k$ can always be transformed to one that is both sparsest and balanced, for all parameters $n$ and $k$ where $\frac{k}{n}(n - k + 1)$ is an integer.
\end{abstract}

\section{Introduction}
We study the problem of constructing a linear error-correcting code in which each code symbol is a function of exactly the same number of message symbols and each message symbol is encoded by the least number of code symbols, subject to the constraint that the code is Maximum Distance Separable (MDS). Furthermore, we require the code to be decodable in polynomial time. This setup is natural for applications such as distributed storage networks and wireless sensor networks. As an example, consider a distributed storage network of $n$ nodes deployed to robustly store a data file of size $k$ symbols in a way such that it can be recovered from any $k$ nodes. It is desirable to build a system in which updating a particular message symbol results in updating the data stored on a minimal number of nodes. In addition, if every storage node encodes the same number of symbols, then the maximum time required to compute the symbol stored at any node is minimized. Thus, the computation load is  balanced across the network in the sense that there are no nodes that behave as bottlenecks. Another example is that of a wireless sensor network as presented in~\cite{Dau2013}. A network of $n$ sensors is jointly measuring $k$ parameters, where each sensor measures the same number of parameters and no parameter is measured more than required. The measurements are transmitted to a fusion center for processing, with the requirement being maximal protection against errors.

One would like to give a deterministic construction of an error-correcting code over a small field that behaves in the aforementioned way, and can be decoded efficiently. In this paper, we show how to construct cyclic Reed--Solomon codes in which each code symbol is a function of $\frac{k}{n}(n - k + 1)$ message symbols (balanced), and each message symbol is encoded by $n - k + 1$ code symbols (sparsest). Being Reed--Solomon codes, they can be decoded efficiently and the required field size of the construction scales linearly in the blocklength.

\subsection{Prior Work}
The problem described earlier was first studied by Dau et al. in~\cite{Dau2013}. In that paper, the authors introduced the concept of balanced and sparsest generator matrices for MDS error-correcting codes. A generator matrix of a length $n$ and dimension $k$ MDS code is called \textit{sparsest} if each row contains exactly $n - k + 1$ nonzero entries. In addition, a generator matrix is called balanced if each column contains $\left\lceil \frac{k}{n}(n - k + 1) \right\rceil$ or $\left\lfloor \frac{k}{n} (n - k + 1) \right\rfloor$ nonzero entries. In the same paper, the authors propose an algorithm that produces a sparsest and balanced matrix $\A \in  \left\lbrace0,1\right\rbrace^{k \times n}$. Using a probabilistic argument, they show (non-constructively) that there exists a choice of coefficients from a finite field that can replace the non-zero entries of $\A$ to produce a generator matrix of a code that is MDS. The field size $q$ should satisfy $q > \binom{n-1}{k-1}$. While such codes can be decoded in polynomial time when considered as erasure codes, very little can be said about the complexity needed to decode them from errors.

The problem of decentralized erasure codes was considered in~\cite{Dimakis2006}. In that paper, the authors propose a randomized scheme to construct a linear MDS code that possesses an optimally sparse generator matrix $\G$. In particular, they show that $O(\log k)$ randomly chosen non-zero entries per row of $\G$ is necessary and sufficient to probabilistically guarantee that the code is MDS. The analysis presented shows that this probability increases polynomially in $q$.

A related problem studied in~\cite{Blaum1999,Louidor2006,Blaum1995} is that of finding lowest-density codes that are MDS over $\F_q^b$, an extension of $\F_q$, but are linear over $\F_q$. Motivated by disk arrays, the codes are systematic over $\F_q^b$ and have sparse generator and parity check matrices over $\F_q$. In particular, the authors of ~\cite{Blaum1999} show that for a code of length $n$ over $\F_q^b$, and dimension $kb$ over $\F_q$, at least $k+1$ and $(n - k + 1)$ non-zero entries in its parity-check matrix and generator matrix (both over $\F_q$), respectively are necessary. Furthermore, they present constructions that attain these bounds for a wide range of $n$ and $k$.

Lastly, the problem of constructing error-correcting codes with general support constraints on their generator matrices was studied in~\cite{Dau2014ISIT, Dau2015JSAC} and~\cite{Halbawi2014DRS, Halbawi2014DGC, Halbawi2015ISIT}. This line of work addresses the problem of finding codes of maximum distance possible that adhere to prescribed encoding constraints defined on a bipartite graph. In particular, a minimum distance bound was derived in~\cite{Halbawi2015ISIT} for any systematic code adhering to the given constraints and was shown to be achievable using Reed--Solomon codes.

\section{Problem Setup}
Formally speaking, we would like to construct a linear MDS error-correcting code of length $n$ and dimension $k$ with generator matrix $\mathbf{G} \in \F_q^{k \times n}$ that satisfies the following properties:
\begin{enumerate}[(P1)]
\item Each column of $\G$ has exactly $\frac{k}{n} (n - k + 1)$ non-zero entries. \label{item:balanced}
\item Each row of $\G$ has exactly $n - k  + 1$ non-zero entries. \label{item:sparsest}
\item The code can be decoded in polynomial time and is constructed over $\F_q$, where $q$ scales linearly in $n$.
\end{enumerate}

Sticking to the terminology of \cite{Dau2013}, we call a code \textit{balanced} if it has a generator matrix that  satisfies (P1) and \textit{sparsest} if it satisfies (P2). A natural question to ask is whether previously known MDS codes possess generator matrices that satify properties (P\ref{item:balanced}) and (P\ref{item:sparsest}). In particular, Reed--Solomon codes are an appealing candidate as they are well studied and can be decoded efficiently using a plethora of decoding algorithms~\cite{welch1986, Massey1969, Guruswami1999}.

We will show that for every $n = q - 1$, where $q$ is a prime power, and $k$ is such that $\frac{k}{n} (n-k+1)$ is an integer, the Reed--Solomon code with defining set $\left\lbrace1, \alpha, \ldots, \alpha^{n-1}\right\rbrace$ has a generator matrix that satisfies (P\ref{item:balanced}) and (P\ref{item:sparsest}). When in this form, we will call the code a balanced\footnote{When the code of interest is required to be MDS, a balanced generator matrix as defined by Property (P\ref{item:balanced}) will necessarily be sparsest, eliminating the need to explicitly mention that it possesses this characteristic.} Reed-Solomon code. We begin by formally defining Reed--Solomon codes and then present a few technical results that culminate in the main theorem of this paper.

\section{Preliminaries}

\subsection{Reed--Solomon Codes}
\label{sec:RS}
Throughout this paper, we will refer to a Reed--Solomon code of length $n$ and dimension $k$ over $\F_q$ as $\RS[n,k]_q$. When $n = q - 1$, we simply drop the subscript $q$ and refer to the code as $\RS[n,k]$. We will use the definition of Reed--Solomon codes as in~\cite{Reed1960}. More precisely, $\RS[n,k]_q$ is the the $k$-dimensional subspace of $\mathbb{F}_q^n$ given by $\C_\text{RS} =\left\lbrace (m(\alpha_1), \ldots, m(\alpha_n)):\deg{m(x)} < k \right\rbrace$, where the $m(x)$ are polynomials over $\F_q$ of degree $\deg{m(x)}$, and the $\alpha_i \in \F_q$ are distinct (fixed) field elements. Each message vector $\mathbf{m} = (m_0, \ldots, m_{k-1})$ is mapped to a message polynomial $m(x)=\sum_{i=0}^{k-1} m_i x^i$, which is then evaluated at the $n$ elements $\left\lbrace \alpha_1,\alpha_2,\ldots,\alpha_n \right\rbrace$ of $\mathbb{F}_q$, known as the defining set of the code. The codeword associated with $m(x)$ is $\mathbf{c} = (m(\alpha_1), \ldots, m(\alpha_n))$,  which we also call the evaluation of $m(x)$ at $\left\lbrace \alpha_1,\alpha_2,\ldots,\alpha_n \right\rbrace$. Reed--Solomon codes are MDS codes; their minimum distance attains the Singleton bound, i.e., ${d(\RS[n,k]_q) = n - k + 1}$. For brevity, we set $d = n - k + 1$.

Throughout this paper, the defining set of $\RS[n,k]$ will be chosen to be the set of consecutive powers of a primitive element $\alpha\in \F_q$, i.e., $\left\lbrace1,\alpha,\ldots, \alpha^{n-1}\right\rbrace$. This choice of evaluation points gives rise to a cyclic Reed--Solomon code. The generator matrix of such a code is
\begin{equation}
\label{eqn:generator}
\G_{\text{RS}} = 
\begin{bmatrix}
1 & 1  & \cdots & 1\\
1 & \alpha  & \cdots & \alpha^{n-1}\\
\vdots & \vdots  & \ddots & \vdots\\
1 & \alpha^{(k-1)}   & \cdots & \alpha^{(n-1)(k-1)}
\end{bmatrix}.
\end{equation}

The polynomial nature of Reed--Solomon codes allows us to precisely characterize codewords with a prescribed set of coordinates required to be equal to 0. Formally, suppose we would like to find a vector $\mathbf{c} \in \RS[n,k]$ for which ${c_{j_1} = \cdots= c_{j_l} = 0}$. We let $t(x) = {\prod_{j = 1}^{l}(x - \alpha^{i_j})} = {\sum_{i = 0}^{l} t_ix^i}$, and form the vector of coefficients of $t(x)$ as $\mathbf{t} = (t_0, t_1, \ldots, t_{l})$. The codeword resulting from encoding of $\mathbf{t}$ using $\G_\text{RS}$ is a codeword $\mathbf{c}$ with zeros in the desired coordinates. Indeed, $\mathbf{t} \G_\text{RS}$ is the evaluation of the polynomial $t(x)$ at $\{1, \alpha, \ldots, \alpha^{n-1}\}$. Since $t(x)$ has $\{\alpha^{j_1}, \ldots, \alpha^{j_l}\}$ as roots, it follows that ${[\mathbf{t} \G_\text{RS}]_{j_1} = \cdots = [\mathbf{t} \G_\text{RS}]_{j_l} = 0}$.
 
We present now the BCH bound, a fact that usually accompanies this interpretation of Reed--Solomon codes and which we'll rely on heavily in the construction presented in this paper. The BCH bound gives a lower bound on the number of nonzero coefficients of a polynomial whose roots are of a particular form.

\begin{fact}[BCH Bound]
Let $p(x)$ be a non-zero polynomial (not divisible by $x^{q-1} - 1$) with coefficients in $\F_q$. Suppose $p(x)$ has $t$ (cyclically) consecutive roots, i.e. $p(\alpha^{j+1}) = \cdots = p(\alpha^{j+t})=0$, where $\alpha$ is primitive in $\F_q$. Then at least $t+1$ coefficients of $p(x)$ are non-zero.
\end{fact}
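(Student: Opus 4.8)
The plan is to prove the bound by contradiction, via a Vandermonde nonsingularity argument. First I would reduce to the case $\deg{p(x)} < q - 1$ by replacing $p(x)$ with its residue $\bar p(x) = p(x) \bmod (x^{q-1} - 1)$. Since $\alpha^{q-1} = 1$, this reduction preserves every evaluation $p(\alpha^i) = \bar p(\alpha^i)$, so $\bar p$ retains the $t$ consecutive roots; moreover $\bar p \neq 0$ because $p$ is not divisible by $x^{q-1} - 1$. Crucially, folding exponents modulo $q - 1$ can only merge (or cancel) coefficients, so the number of nonzero coefficients of $\bar p$ is at most that of $p$. Hence it suffices to prove the bound for $\bar p$, which I may now assume has exponents in $\{0, 1, \ldots, q-2\}$.

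Next I would suppose, for contradiction, that $\bar p$ has at most $t$ nonzero coefficients, and write $\bar p(x) = \sum_{m=1}^{s} p_{i_m} x^{i_m}$ with $s \leq t$, distinct exponents $0 \leq i_1 < \cdots < i_s \leq q - 2$, and each $p_{i_m} \neq 0$. The $t$ root conditions $\bar p(\alpha^{j+\ell}) = 0$ for $\ell = 1, \ldots, t$ yield a homogeneous linear system in the unknowns $(p_{i_1}, \ldots, p_{i_s})$; retaining the first $s \leq t$ of these equations gives a square system whose coefficient matrix $M$ has entries $M_{\ell, m} = \alpha^{(j+\ell) i_m}$.

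The key step is to show $M$ is nonsingular. I would factor $\alpha^{j i_m}$ out of each column $m$, and a further $\alpha^{i_m}$, to expose a standard Vandermonde matrix in the nodes $\beta_m := \alpha^{i_m}$, whose determinant equals $\prod_{m < m'} (\beta_{m'} - \beta_m)$ up to the nonzero scalars pulled out. Because $\alpha$ is primitive of order $q - 1$ and the exponents $i_m$ are distinct elements of $\{0, \ldots, q-2\}$, the nodes $\beta_m$ are distinct (and nonzero), so this determinant is nonzero and $M$ is invertible. A nonsingular homogeneous system forces $p_{i_1} = \cdots = p_{i_s} = 0$, contradicting $p_{i_m} \neq 0$. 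Therefore $\bar p$, and hence $p$, has at least $t + 1$ nonzero coefficients.

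I expect the reduction modulo $x^{q-1} - 1$ to be the one step that needs care: one must check both that it preserves the roots (immediate from $\alpha^{q-1} = 1$) and that it does not increase the nonzero-coefficient count, since the statement bounds the coefficients of the original $p$ rather than those of its residue. The Vandermonde computation itself is routine once the nodes are seen to be distinct, a fact that rests entirely on the primitivity of $\alpha$.
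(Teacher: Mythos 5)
Your proof is correct. The paper does not actually prove this fact---it states it as a known result and defers to the reference \cite[p.~238]{McEliece}---but your argument is the standard one: reduce modulo $x^{q-1}-1$ (correctly observing that this preserves the roots, cannot increase the number of nonzero coefficients, and leaves a nonzero residue by the divisibility hypothesis), then derive a contradiction from the nonsingularity of a Vandermonde matrix in the distinct nonzero nodes $\alpha^{i_m}$. You also correctly exploit the one place where consecutiveness of the roots matters, namely that the retained equations have consecutive exponents $j+1,\ldots,j+s$, so the factored matrix is a genuine Vandermonde rather than a generalized one.
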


For a proof of the BCH bound, see e.g.,~\cite[p.238]{McEliece}. We will use the BCH bound to show that for a polynomial $p(x)$ whose roots are $k-1$ consecutive powers of $\alpha$, its $k$  scaled versions $\left\lbrace p(\alpha^{j_l}x)\right\rbrace_{l = 1}^k$ are linearly independent. 

\begin{lem}
\label{lem:BCH}
Let $\alpha \in \F_q$ be a primitive element. Let $p(x) = \prod_{i = 0}^{k-2}(x - \alpha^i)$ and define the scaled polynomial $p^{(j_l)}(x) = p(\alpha^{j_l}x)$, and $k < q$. Then, the polynomials $\left\lbrace p^{(j_l)}(x)\right\rbrace_{l = 1}^{k}$ are linearly independent whenever the $j_l\textrm{'s}$ are distinct modulo ${q-1}$.
\end{lem}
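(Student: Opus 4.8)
The plan is to reduce linear independence to the non-vanishing of a single $k \times k$ determinant, which will then factor as a product of two quantities that I can handle separately. First I would write $p(x) = \sum_{i=0}^{k-1} p_i x^i$, so that scaling gives $p^{(j_l)}(x) = p(\alpha^{j_l} x) = \sum_{i=0}^{k-1} p_i \alpha^{i j_l} x^i$. Each $p^{(j_l)}$ has degree at most $k-1$, so all $k$ of them lie in the $k$-dimensional space of polynomials of degree less than $k$, and they are linearly independent exactly when the $k \times k$ coefficient matrix $M$ with entries $M_{l,i} = p_i \alpha^{i j_l}$ (for $1 \le l \le k$, $0 \le i \le k-1$) is nonsingular.

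Next I would factor $M = VD$ (up to the obvious indexing), where $D = \mathrm{diag}(p_0, p_1, \ldots, p_{k-1})$ pulls the common coefficient out of each column and $V_{l,i} = (\alpha^{j_l})^i$ is a Vandermonde matrix in the nodes $\alpha^{j_1}, \ldots, \alpha^{j_k}$. Hence $\det M = \left(\prod_{i=0}^{k-1} p_i\right)\prod_{l < l'}(\alpha^{j_{l'}} - \alpha^{j_l})$. The Vandermonde factor is nonzero precisely because the nodes are distinct: since $\alpha$ is primitive it has multiplicative order $q-1$, so the hypothesis that the $j_l$ are distinct modulo $q-1$ guarantees $\alpha^{j_l} \neq \alpha^{j_{l'}}$ for $l \neq l'$.

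It remains to show the diagonal factor $\prod_{i=0}^{k-1} p_i$ is nonzero, i.e.\ that \emph{every} coefficient of $p(x)$ is nonzero; this is the one genuinely nonobvious step, and it is exactly where the BCH bound enters. By construction $p(x) = \prod_{i=0}^{k-2}(x - \alpha^i)$ has the $k-1$ cyclically consecutive powers $\alpha^0, \alpha^1, \ldots, \alpha^{k-2}$ as roots, and since $k < q$ its degree $k-1$ is strictly less than $q-1$, so $p$ is not divisible by $x^{q-1}-1$ and the hypotheses of the BCH bound are met. The bound then forces $p$ to have at least $(k-1)+1 = k$ nonzero coefficients. But $\deg{p} = k-1$ means $p$ has only $k$ coefficients in total, so all of them must be nonzero, giving $\prod_{i=0}^{k-1} p_i \neq 0$. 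Combining this with the previous paragraph yields $\det M \neq 0$ and hence the claimed linear independence. The main obstacle is precisely recognizing that the BCH bound can be leveraged to conclude that a degree-$(k-1)$ polynomial with $k-1$ consecutive roots is \emph{fully dense}; once that is in hand, the rest is a routine Vandermonde argument.
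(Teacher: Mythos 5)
Your proposal is correct and follows essentially the same route as the paper: factor the coefficient matrix into a Vandermonde matrix in the nodes $\alpha^{j_1},\ldots,\alpha^{j_k}$ times a diagonal of the coefficients $p_i$, use primitivity of $\alpha$ and distinctness of the $j_l$ modulo $q-1$ for the Vandermonde factor, and invoke the BCH bound to conclude all $k$ coefficients of the degree-$(k-1)$ polynomial $p$ are nonzero. Your added check that $p$ is not divisible by $x^{q-1}-1$ is a nice touch the paper leaves implicit.
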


\begin{proof}
Let $p(x) = \sum_{i = 0}^{k-1}p_i x^i$. Then, we have $p^{(j_l)}(x)  =\sum_{i = 0}^{k-1}p_i \alpha^{j_li}x^i$.   Form a matrix $\mathbf{P}$ where the $l^\textrm{th}$ row is ${\mathbf{p}_{j_l} = \left(p_0, p_1\alpha^{j_l}, \ldots, p_{k-1}\alpha^{j_l(k-1)}\right)}$, i.e.,
\begin{equation*}
\mathbf{P} = \begin{bmatrix}
p_0 & p_1\alpha^{j_1} & \cdots & p_{k-1}\alpha^{j_1(k-1)}\\
\vdots & \vdots & \ddots & \vdots\\
p_0 & p_1\alpha^{j_k} & \cdots & p_{k-1}\alpha^{j_k(k-1)}\\
\end{bmatrix}.
\end{equation*}

We can write the determinant of $\mathbf{P}$ as 
\begin{equation*}
\mathsf{det}(\mathbf{P}) = \begin{vmatrix}
1& \alpha^{j_1} & \cdots & \alpha^{j_1(k-1)}\\
\vdots & \vdots & \ddots & \vdots\\
1& \alpha^{j_k} & \cdots & \alpha^{j_k(k-1)}\\
\end{vmatrix}
\prod_{i=0}^{k-1}p_i.
\end{equation*}

The matrix in the expression is a Vandermonde matrix and has a nonzero determinant if and only if $\left\lbrace \alpha^{j_1}, \ldots, \alpha^{j_k}\right\rbrace$ are distinct in $\F_q$. Indeed, this is the case when $\alpha$ is a primitive root in $\F_q$ and the exponents are distint modulo $q-1$. Furthermore, the BCH bound in Fact 1 guarantees that the $p_i\text{'s}$ are all nonzero. Therefore, $\mathbf{P}$ is a full rank matrix and the polynomials $\left\lbrace p^{(j_l)}(x)\right\rbrace_{l = 1}^{k}$ are linearly independent over $\F_q$.
\end{proof}

Equipped with this Lemma \ref{lem:BCH}, we deduce that the set of codewords in $\RS[n,k]$ that correspond to message polynomials $\left\lbrace p^{(j_l)}(x)\right\rbrace_{l = 1}^{k}$ are linearly independent. The following corollary is immediate.

\begin{clry}
\label{clry:BCH}
Let $\alpha \in \F_q$ be a primitive element. Let $p(x)$ and $\left\lbrace p^{(j_l)}(x)\right\rbrace_{l = 1}^{k}$ be as in Lemma \ref{lem:BCH}, with $k < q$ and $n = q - 1$. Let $\mathbf{c}_l$ be the evaluation of $p^{(j_l)}(x)$ at $\left\lbrace1, \alpha, \ldots, \alpha^{n-1}\right\rbrace$. Then, the set of codewords $\left\lbrace\mathbf{c}_l\right\rbrace_{l=1}^k$ spans $\RS[n,k]$.
\end{clry}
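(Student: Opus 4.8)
The plan is to read the corollary as a direct transfer of Lemma~\ref{lem:BCH}: the evaluation map sending a message polynomial to its codeword is a linear isomorphism onto $\RS[n,k]$, so it carries the linear independence of the polynomials $\left\lbrace p^{(j_l)}(x)\right\rbrace_{l=1}^{k}$ over to the codewords $\left\lbrace \mathbf{c}_l \right\rbrace_{l=1}^{k}$; since there are exactly $k$ of them and $\dim{\RS[n,k]} = k$, they must form a basis and hence span the code.

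First I would confirm that each $p^{(j_l)}(x)$ is a legitimate message polynomial for $\RS[n,k]$. Since $p(x) = \prod_{i=0}^{k-2}(x-\alpha^i)$ has degree $k-1$, the scaled polynomial $p^{(j_l)}(x) = p(\alpha^{j_l}x)$ also has degree exactly $k-1 < k$; thus each $\mathbf{c}_l$, being the evaluation of such a polynomial at $\left\lbrace 1,\alpha,\ldots,\alpha^{n-1}\right\rbrace$, indeed lies in $\RS[n,k]$ by definition. Next I would establish that the evaluation map $\phi$, sending a polynomial of degree less than $k$ to its evaluation vector, is injective. Because $k < q$ forces $k \le q-1 = n$, any nonzero polynomial of degree less than $k$ has at most $k-1 < n$ roots and therefore cannot vanish at all $n$ distinct evaluation points; hence $\ker\phi = \{0\}$. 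As $\phi$ is $\F_q$-linear from the $k$-dimensional space of polynomials of degree less than $k$ into the $k$-dimensional space $\RS[n,k]$ with trivial kernel, it is an isomorphism.

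With $\phi$ in hand, the conclusion follows immediately: Lemma~\ref{lem:BCH} gives that $\left\lbrace p^{(j_l)}(x)\right\rbrace_{l=1}^{k}$ are linearly independent over $\F_q$ (the hypothesis that the $j_l$ are distinct modulo $q-1$ being inherited from the lemma), and applying the isomorphism $\phi$ preserves independence, so $\left\lbrace \mathbf{c}_l \right\rbrace_{l=1}^{k}$ are $k$ linearly independent vectors in $\RS[n,k]$ and therefore a spanning set.

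There is essentially no obstacle here; the corollary is a short bridge between the polynomial picture of Lemma~\ref{lem:BCH} and the coordinate picture of actual codewords. The only point demanding care is the transfer of linear independence from polynomials to codewords, which rests entirely on the injectivity of $\phi$, i.e., on the elementary fact that a nonzero polynomial of degree less than $k \le n$ cannot vanish at all $n$ distinct points. Once that is noted, the dimension count does the rest.
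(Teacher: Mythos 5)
Your proposal is correct and matches the paper's intent: the paper declares this corollary ``immediate'' from Lemma~\ref{lem:BCH} without writing out a proof, and your argument---injectivity of the evaluation map on polynomials of degree less than $k \le n$, transfer of linear independence, then a dimension count---is exactly the standard reasoning being invoked.
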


\section{Construction}
We are now ready to present the main result of the paper. We will constructively prove the following theorem.

\begin{thm}
\label{thm:main}
Let $n = q - 1$ where $q$ is a prime power, and $k$ is such that $\frac{k}{n}(n-k+1)$ is an integer. Then, there exists a Reed--Solomon code $\RS[n,k]_q$ with a generator matrix $\G$ in which every row is has weight $n- k + 1$ and every column has weight $\frac{k}{n}(n-k+1)$.
\end{thm}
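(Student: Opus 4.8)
The plan is to build $\G$ directly out of the scaled minimum-weight codewords furnished by Corollary~\ref{clry:BCH}, and then choose the scalings so that the resulting matrix is balanced. Recall that $p(x)=\prod_{i=0}^{k-2}(x-\alpha^i)$ has degree $k-1$ and, by the BCH bound, all $k$ of its coefficients are nonzero; its evaluation at $\{1,\alpha,\dots,\alpha^{n-1}\}$ therefore vanishes at exactly the $k-1$ consecutive coordinates indexed by $\{0,1,\dots,k-2\}$ and is nonzero at the remaining $d=n-k+1$ coordinates. The scaled polynomial $p^{(j_l)}(x)=p(\alpha^{j_l}x)$ has roots $\{\alpha^{-j_l},\alpha^{1-j_l},\dots,\alpha^{k-2-j_l}\}$, so the codeword $\mathbf{c}_l$ obtained by evaluating it is a cyclic shift of the one coming from $p(x)$: its zero set is a block of $k-1$ cyclically consecutive coordinates and its support is the complementary cyclic interval of length exactly $d$. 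Thus every $\mathbf{c}_l$ has row weight exactly $d$, which settles Property~(P\ref{item:sparsest}) automatically, and by Corollary~\ref{clry:BCH} any choice of $j_1,\dots,j_k$ that are distinct modulo $n$ yields $k$ codewords that span $\RS[n,k]$ and hence form the rows of a generator matrix.

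It remains to choose the shifts so that Property~(P\ref{item:balanced}) holds. Writing $s_l \equiv (k-1)-j_l \pmod n$ for the starting coordinate of the support of $\mathbf{c}_l$, the support of row $l$ is the cyclic interval $\{s_l,s_l+1,\dots,s_l+d-1\}$, and distinctness of the $j_l$ is equivalent to distinctness of the $s_l$. The weight of column $c$ of $\G$ equals the number of rows whose support contains $c$, i.e.\ the number of indices $l$ with $s_l \in \{c-d+1,\dots,c\}$. Hence the problem reduces to a purely combinatorial one: choose a $k$-subset $T=\{s_1,\dots,s_k\}\subseteq \mathbb{Z}_n$ such that every cyclic interval of length $d$ contains exactly $w:=\frac{k}{n}(n-k+1)=\frac{kd}{n}$ elements of $T$.

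The crux is to show such a $T$ exists, and this is where the integrality hypothesis enters. Let $\chi\in\{0,1\}^n$ be the indicator of $T$ and let $W_c=\sum_{i=0}^{d-1}\chi_{c+i}$ be the window sums (indices mod $n$). Since $W_{c+1}-W_c=\chi_{c+d}-\chi_c$, all the $W_c$ are equal if and only if $\chi_{c+d}=\chi_c$ for every $c$, i.e.\ if and only if $\chi$ is invariant under translation by $d$, equivalently $g$-periodic where $g=\gcd(d,n)$, since the subgroup of $\mathbb{Z}_n$ generated by $d$ is precisely the set of multiples of $g$. For any such $g$-periodic $\chi$ a window of length $d$ covers exactly $d/g$ full periods, so its sum is automatically constant, and it suffices to exhibit a $g$-periodic pattern with the right number of ones. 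Writing $d=g d'$ and $n=g n'$ with $\gcd(d',n')=1$, the hypothesis $\frac{kd}{n}=\frac{kd'}{n'}\in\mathbb{Z}$ forces $n'\mid k$, so $a:=\frac{kg}{n}=\frac{k}{n'}$ is an integer with $1\le a\le g$; placing $a$ ones in each of the $n/g$ periods (for instance at the coordinates congruent to $0,1,\dots,a-1$ within each block of length $g$) yields a set $T$ of size $\frac{n}{g}\cdot a=k$ for which every length-$d$ window has sum $\frac{d}{g}\cdot a=\frac{kd}{n}=w$.

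Taking the $s_l$ to be the elements of this $T$ (equivalently $j_l\equiv (k-1)-s_l$) gives distinct shifts, so the rows $\mathbf{c}_1,\dots,\mathbf{c}_k$ form a basis of $\RS[n,k]$, each has weight $d$, and each column has weight $w$, which completes the construction. The one step demanding genuine care---and the main obstacle---is the combinatorial argument of the third paragraph: recognizing that constant window sums force $\gcd(d,n)$-periodicity, and that the hypothesis $\frac{k}{n}(n-k+1)\in\mathbb{Z}$ is exactly what guarantees an integer number of nonzeros per period, hence a valid balanced placement.
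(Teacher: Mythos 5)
Your proof is correct, and its coding-theoretic skeleton coincides with the paper's: you use the same scaled polynomials $p^{(j_l)}(x)=p(\alpha^{j_l}x)$ whose codewords have cyclically consecutive supports of length $d$ (making sparsity and the reduction to a window-covering problem automatic), and the same appeal to Corollary~\ref{clry:BCH} for linear independence. The genuine difference is in the combinatorial core. The paper's Lemma~\ref{lem:balanced} exhibits one explicit selection vector $\mathbf{v}=(\mathbf{1},\mathbf{0},\ldots,\mathbf{1},\mathbf{0})$ with block lengths $k/g$ and $(d-1)/g$ for $g=\gcd(k,n)$, and verifies $\mathbf{v}\A=\mathbf{b}$ by direct summation over each window. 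You instead characterize all valid selections via the telescoping identity $W_{c+1}-W_c=\chi_{c+d}-\chi_c$: constant length-$d$ window sums are equivalent to invariance of $\chi$ under translation by $d$, hence to $\gcd(d,n)$-periodicity, and the hypothesis $\frac{k}{n}(n-k+1)\in\mathbb{Z}$ is then exactly the condition that an integer number $a=k\gcd(d,n)/n$ of ones fits in each period. Your route buys a converse (within this family of masks every balanced selection must be $\gcd(d,n)$-periodic, so the construction is essentially forced) and makes transparent why the integrality hypothesis is the natural one; the paper's route hands the reader a single concrete $\mathbf{v}$ without needing the necessity direction. Note that your modulus $\gcd(d,n)=\gcd(d,k-1)$ is not the paper's $\gcd(k,n)=\gcd(k,d-1)$, but the paper proves its pattern has period $n/\gcd(k,n)$ dividing both $d$ and $n$, hence dividing your $\gcd(d,n)$, so its vector is an instance of your characterization; in the paper's two worked examples the two recipes produce the identical indicator vector.
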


We will demonstrate the proof of this theorem after presenting a lemma that is key to the construction. The underlying idea of the construction is to select a set of $k$ vectors from $\{0,1\}^n$, which we call codeword \textit{masks} $\{\mathbf{a}_{j_1}, \ldots, \mathbf{a}_{j_k}\}$, such that when stacked as rows of a matrix $\A_{n,k}$, properties (P\ref{item:balanced}) and (P\ref{item:sparsest}) are satisfied. For each mask $\mathbf{a}_{j_l}$, we select a codeword from $\RS[n,k]$ with zeros in locations as prescribed by the support of $\mathbf{a}_{j_l}$. As mentioned earlier, the polynomial nature of Reed--Solomon codes allows us to accomplish this easily. Once the codewords are fixed, one has to ensure that they span $\RS[n,k]$, resulting in a generator matrix for $\RS[n,k]$ which is both balanced and sparsest.

We start by describing the set of masks that will be used in our construction. Property (P\ref{item:sparsest}) requires that each mask $\mathbf{a}_i$ has $k-1$ zeros. For the case when $\frac{k}{n}(n-k+1)$ is an integer, it turns out that restricting the codeword \textit{masks} to those with $k-1$ cyclically consecutive zeros suffices to construct a balanced matrix $\A_{n,k}$. Henceforth, let $\mathbf{a}$ be a vector of length $n$ with $k-1$ (cyclically) consecutive zeros and $d = n-k+1$ (cyclically) consecutive ones. Let ${\A \in \left\lbrace0,1\right\rbrace^{n \times n}}$ be the circulant matrix, denoted by $\mathsf{circ}(\mathbf{a})$, whose rows $\left\lbrace\mathbf{a}_i\right\rbrace_{i = 0}^{n-1}$ are left circular shifts of $\mathbf{a}$. In particular $\mathbf{a}_i$ is the vector $\mathbf{a}$ shifted $i$ times to the left. Indexing the columns of $\A$ from $0$ to $n-1$, we choose $\mathbf{a}$ in a way so that the $i\text{'th}$ row $\mathbf{a}_i$ has zeros in locations ${\{d - i,\ldots, d - i + k - 2\}}$ modulo $n$.  For example, let $n = 6$ and $k = 3$, then we obtain
\begin{equation*}
\A = 
\begin{bmatrix}
1 & 1 & 1 & 1 & 0 & 0\\
1 & 1 & 1 & 0 & 0 & 1\\
1 & 1 & 0 & 0 & 1 & 1\\
1 & 0 & 0 & 1 & 1 & 1\\
0 & 0 & 1 & 1 & 1 & 1\\
0 & 1 & 1 & 1 & 1 & 0
\end{bmatrix}.
\end{equation*}

Each row of $\A$ corresponds to the mask of a potential codeword of the code's generator matrix as desired. However, we need to jointly select a set of $k$-subset of $\left\lbrace\mathbf{a}_i\right\rbrace_{i = 1}^n$ so that each column of $\G$ has weight $\frac{k}{n}(n-k+1)$. We will now pose a linear system whose solution provides a recipe for choosing the rows of $\G$. Let $b = \frac{k}{n}(n- k + 1)$ and let $\mathbf{b}$ be the all $b$ vector of length $n$. We aim to find a row vector $\mathbf{v} \in \left\lbrace0,1\right\rbrace^n$ of weight exactly $k$ such that the following holds,
\begin{equation*}
\mathbf{v}\A = \mathbf{b}.
\end{equation*}

A solution $\mathbf{v}$ dictates which codewords of $\textsf{RS}[n,k]$ will form $\G$. If $v_i$ is non-zero, then a codeword with mask $\mathbf{a}_i$ is chosen as a row of $\G$. Since $\mathbf{v}$ is $k$-sparse, a set of exactly $k$ codewords is chosen. In our example, $\mathbf{v} = (1,0,1,0,1,0)$ is one such vector, which selects the rows $\left\lbrace\mathbf{a}_0,\mathbf{a}_2,\mathbf{a}_4\right\rbrace$. The corresponding matrix is,
\begin{equation*}
\mathbf{A}_{6,3} = 
\begin{bmatrix}
1 & 1 & 1 & 1 & 0 & 0\\
1 & 1 & 0 & 0 & 1 & 1\\
0 & 0 & 1 & 1 & 1 & 1
\end{bmatrix}.
\end{equation*}

This particular choice of $\mathbf{v}$ generalizes to arbitrary $n$ and $k$ when $\frac{k}{n}(n - k +1)$ is an integer. We present this fact formally in the following lemma.

\begin{lem}
\label{lem:balanced}
Let $n,k$ be such that $b = \frac{k}{n}(n - k + 1) \in \mathbb{Z}$, with $\mathbf{b}$ being the all $b$ vector. Let $\mathbf{a} = (1,\ldots,1,0,\ldots,0)$ be $d$-sparse and $\mathbf{\A} = \mathsf{circ}(\mathbf{a})$. Then a solution to $\mathbf{v}\A = \mathbf{b}$ exists with $\mathbf{v} \in \left\lbrace0,1\right\rbrace^n$ being $k$-sparse. Furthermore, one such solution is $\mathbf{v} = (\mathbf{1},\mathbf{0},\mathbf{1},\mathbf{0},\ldots,\mathbf{1},\mathbf{0})$ where $\mathbf{1}$ is the all-one vector of length $\frac{k}{g}$ and $\mathbf{0}$ is the all-zero vector of length $\frac{d-1}{g}$, and $g = \mathsf{gcd}(k,n)$.
\end{lem}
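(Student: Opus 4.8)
The plan is to translate the matrix identity $\mathbf{v}\A = \mathbf{b}$ into a purely combinatorial statement about cyclic intervals, and then to exploit the integrality of $b$ through a coprimality argument. First I would unwind the circulant structure of $\A$. With the indexing fixed above, the entry $A_{i,j}$ equals $1$ exactly when $(i+j) \bmod n \in \{0,1,\ldots,d-1\}$, so for each fixed column $j$ the set of rows with $A_{i,j}=1$ is the cyclic interval $I_j = \{-j,-j+1,\ldots,-j+d-1\} \pmod n$ of length $d$. Writing $S = \{i : v_i = 1\}$ for the support of $\mathbf{v}$, this gives $(\mathbf{v}\A)_j = \sum_{i} v_i A_{i,j} = \card{S \cap I_j}$. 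As $j$ ranges over $\{0,\ldots,n-1\}$ the interval $I_j$ ranges over \emph{every} cyclic interval of length $d$, so the identity $\mathbf{v}\A = \mathbf{b}$ is equivalent to the single assertion that every length-$d$ cyclic window contains exactly $b$ elements of $S$.

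Next I would describe the proposed $S$ and record the relevant arithmetic. Set $g = \gcd(k,n)$, $m = n/g$, and $r = k/g$, so that $\gcd(r,m)=1$. The repeated block $(\mathbf{1},\mathbf{0})$ has length $\tfrac{k}{g} + \tfrac{d-1}{g} = \tfrac{k+(n-k)}{g} = \tfrac{n}{g} = m$, and $\mathbf{v}$ consists of $g$ such blocks, giving total length $gm = n$ and exactly $g\cdot\tfrac{k}{g}=k$ ones, so $\mathbf{v}$ is indeed $k$-sparse. Moreover $S = \{i \in \mathbb{Z}_n : i \bmod m \in \{0,1,\ldots,r-1\}\}$; that is, $S$ is periodic of period $m$ and within each period occupies precisely the $r$ residues $\{0,\ldots,r-1\}$. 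Because $n = gm$ is a multiple of $m$, this periodicity is fully compatible with the cyclic structure of $\mathbb{Z}_n$.

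The crux, and the step I expect to be the main obstacle, is to show that the window length $d$ is itself a multiple of the period $m$; this is exactly where the hypothesis $b \in \mathbb{Z}$ enters. Since $b = \tfrac{kd}{n} = \tfrac{rd}{m}$ and $\gcd(r,m)=1$, the integrality of $b$ forces $m \mid d$, say $d = Qm$ with $Q = d/m \ge 1$. Once this divisibility is in hand the conclusion is immediate: any run of $d = Qm$ consecutive positions in $\mathbb{Z}_n$ meets each residue class modulo $m$ exactly $Q$ times (the wraparound at $n$ causes no trouble since $m \mid n$), so every length-$d$ window contains $Q$ points from each of the $r$ residues comprising $S$, hence exactly $Qr = \tfrac{d}{m}\cdot r = \tfrac{rd}{m} = b$ points in total, independently of where the window begins. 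This yields $\card{S \cap I_j} = b$ for all $j$, i.e. $\mathbf{v}\A = \mathbf{b}$, and establishes existence via the explicit $\mathbf{v}$. The only genuine content is the divisibility $m \mid d$; everything surrounding it is bookkeeping about circulants and periodic sets.
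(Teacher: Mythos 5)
Your proposal is correct and follows essentially the same route as the paper: both reduce the claim to counting the support of the periodic $\mathbf{v}$ inside an arbitrary cyclic window of $d$ ones, and both hinge on the same key divisibility (your $m \mid d$ is the paper's $\beta_n \mid d$, extracted from $b = \tfrac{k}{n}d \in \mathbb{Z}$ together with $\gcd(k/g, n/g)=1$). The only cosmetic difference is that you count residue classes modulo $m$ directly, while the paper invokes periodicity of $\mathbf{v}$ with period $d$ and reindexes over a complete residue system modulo $d$.
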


\begin{proof}
Let $k = \beta_k g$, $d - 1 = \beta_d g$, and $\beta_n = \beta_k + \beta_d$. Since $b = \frac{k}{n}d =  \frac{\beta_k}{\beta_d + \beta_k}d \in \mathbb{Z}$ and ${\beta_d + \beta_k} \nmid {\beta_k}$, we have that $\beta_n \mid d$, and so $d = \beta_n m$ for some $m$. Since $\A$ is a circulant matrix, we have that $\A^\mathsf{t} = \A$ and so the first column of $\A$ is precisely $\mathbf{a}^\mathsf{t}$. Fix $\mathbf{v}$ as in the statement of the lemma and consider the product between $\mathbf{v}$ and $\mathbf{a}^\mathsf{t}$,
\begin{align}
\mathbf{v} \cdot \mathbf{a}^\mathsf{t} &= \sum_{i = 0}^{n - 1}v_ia_i \nonumber\\ 
 &= \sum_{i = 0}^{d - 1} v_i \label{eqn:d-sparse} \\ 
 &= \sum_{j = 1}^m \sum_{i = 1}^{\beta_k}1  \label{eqn:countability} \\
 &= \beta_k m  \nonumber  \\
 &= \beta_k\frac{d}{\beta_n} = b. \nonumber 
\end{align}

where \eqref{eqn:d-sparse} follows from the fact that only the first $d$ entries of  $\mathbf{a}^\mathsf{t}$  are non-zero. We have that $\mathbf{v}$ is composed of an alternating sequence of $\mathbf{1}\text{'s}$ and $\mathbf{0}\text{'s}$ with lengths $\beta_k$ and $\beta_d$, respectively. Since $\beta_n | d$, we deduce that the first $d$ entries of $\mathbf{v}$ are precisely $m$ copies of $(\mathbf{1},\mathbf{0})$. As a result, we obtain equation \eqref{eqn:countability}. Thus, we have that the first entry of the $\mathbf{v}\A$ is indeed equal to $b$.

Now fix an arbitrary column $\tilde{\mathbf{a}}^\mathsf{t}$ of the matrix $\A$ where the sequence of ones starts at position $l$ and ends at $((l + d - 1))_n$, and $((\cdot))_n$ refers to reducing the argument modulo $n$. The product between $\mathbf{v}$ and $\hat{\mathbf{a}}^\mathsf{t}$ results in\footnote{We did not reduce the indices of the sum in \eqref{lem:eqn:mod n} modulo $n$ to provide a more concise proof. For any index $i \geq n$, define $v_i = v_{((i))_n}$.},
\begin{align}
\mathbf{v} \cdot \tilde{\mathbf{a}}^\mathsf{t} &= \sum_{i = l}^{l + d - 1} v_i   \label{lem:eqn:mod n} \\
&= \sum_{i \in \{l, \ldots, l + d - 1\}_d} v_i  \label{lem:eqn:mod d}\\
&= \sum_{i =0}^{d-1} v_i = b.  \nonumber 
\end{align}

Since $\mathbf{v}$ is periodic with period $\beta_n$, and as $\beta_n \mid d$, it is also periodic with period $d$. Therefore, $v_{i} = v_{((i))_d}$ for ${i \in \{l, \ldots, l + d - 1\}}$ which justifies \eqref{lem:eqn:mod d}.  Lastly, since $\left\lbrace l,\ldots, l + d - 1\right\rbrace$ are distinct modulo $d$, i.e. is a complete residue system modulo $d$, we can reindex the sum to run over $\{0,\ldots, d-1\}$, and we know that it is equal to $b$. Since $\tilde{\mathbf{a}}^\mathsf{t}$ was an arbitrary column of $\A$, we have established that $\mathbf{v}\A = \mathbf{b}$.
\end{proof}

We are now ready to prove the main theorem of the paper.

\begin{proof}[Proof of Theorem 1]
Fix $k$ and $n = q - 1$ such that ${b = \frac{k}{n}(n - k + 1)}$ is an integer. Construct the matrix $\A$ as in Lemma \ref{lem:balanced}, which guarantees the existence of a $k$-sparse solution in $\{0,1\}^n$ to $\mathbf{v}\A = \mathbf{b}$. As mentioned earlier, the solution $\mathbf{v}$ is used to construct a balanced matrix $\A_{n,k}$, and indicates which codewords from $\RS[n,k]$ are to be selected to form the generator matrix $\G$ which satisfies (P\ref{item:balanced}) and (P\ref{item:sparsest}). Let the support of $\mathbf{v}$ be $\{j_1,\ldots,j_k\}$, which implies that codewords with masks ${\{\mathbf{a}_{j_1},\ldots,\mathbf{a}_{j_k}\}}$ are to be chosen. By construction $\mathbf{a}_{j_l}$ has zeros in locations ${\{d - j_l, \ldots, d - j_l + k - 2\}_n}$. Let ${p(x) = \prod_{i = 0}^{k-2}(x - \alpha^i)}$. For each mask $\mathbf{a}_{j_l}$, form the polynomial $p^{(j_l)}(x) = p\left(\alpha^{-(d - j_l)}x\right)$. We then have,
\begin{align}
p^{(j_l)}(x) &= p\left(\alpha^{-(d - j_l )}x \right) \nonumber \\
&= \prod_{i = 0}^{k - 2} \left(\alpha^{-(d - j_l)}x - \alpha^i \right) \nonumber \\
&= C_{j_l} \prod_{i = 0}^{k - 2} \left(x - \alpha^{(d - j_l + i)} \right) \nonumber \\
&= C_{j_l} \prod_{i = d - j_l }^{d - j_l + k - 2} \left(x - \alpha^{i} \right) \label{eqn:scaled_roots}\\
&= \sum_{i = 0}^{k-1} p_{l,i} x^i.
\end{align}

where $C_{j_l}$ is a non-zero constant. Indeed, we have that $p^{(j_l)}(x)$ vanishes at $\left\lbrace\alpha^{d - j_l},\ldots,\alpha^{d - j_l +k - 2}\right\rbrace$, as is evident from \eqref{eqn:scaled_roots}. Now let ${\mathbf{p}_{l} = (p_{l,0},\ldots,p_{l,{k-1}})}$ be the $l^{\text{th}}$ row of the matrix $\mathbf{P}$. We let the generator matrix $\mathbf{G}$ be $\mathbf{P}\mathbf{G}_\text{RS}$, where $\mathbf{G}_\text{RS}$ is defined as in \eqref{eqn:generator}. As described earlier, we have that the $(l,j)^\text{th}$ entry of $\mathbf{G}$ is equal to 0 if and only if $p^{(j_l)}(x)$ vanishes at $\alpha^{j}$. It now follows that $\mathbf{G}$ is sparsest and balanced as desired since it has the mask of $\A_{n,k}$. Indeed, the matrix  $\mathbf{P}$ is invertible by Lemma \ref{lem:BCH} and so $\G$ is full rank by Corollary \ref{clry:BCH}. As a result, the dimension of the code generated by $\G$ is $k$ and so it spans $\RS[n,k]$.
\end{proof}

\section{Example}
In this section, we will construct a balanced and sparsest generator matrix for $\RS[6,3]$. For these parameters, we have ${b = \frac{k}{n}(n - k + 1) = 2}$. The matrix of potential codeword masks is
\begin{equation*}
\A = 
\begin{bmatrix}
1 & 1 & 1 & 0 & 0 & 0\\
1 & 1 & 0 & 0 & 0 & 1\\
1 & 0 & 0 & 0 & 1 & 1\\
0 & 0 & 0 & 1 & 1 & 1\\
0 & 0 & 1 & 1 & 1 & 0\\
0 & 1 & 1 & 1 & 0 & 0
\end{bmatrix}.
\end{equation*}

Since $d = 3$, $g = 3$, and $\beta_k = \beta_d = 1$, a $4-$sparse  solution in $\{0,1\}^6$ to ${\mathbf{v}\A =\mathbf{b}}$ is ${\mathbf{v} = (1,1,0,1,1,0)}$. The resulting set of masks is ${\{\mathbf{a}_0, \mathbf{a}_1, \mathbf{a}_3, \mathbf{a}_4\}}$ which in matrix form is given by
\begin{equation}
\A_{6,4} = 
\begin{bmatrix}
1 & 1 & 1 & 0 & 0 & 0\\
1 & 1 & 0 & 0 & 0 & 1\\
0 & 0 & 0 & 1 & 1 & 1\\
0 & 0 & 1 & 1 & 1 & 0\\
\end{bmatrix}.
\end{equation}

Now fix $\alpha \in \F_7$ to be a primitive element. For example, ${\alpha = 3}$. The codewords to be chosen are associated with the polynomials 
\begin{align*}
p^{(0)}(x) &= 6(x - 6)(x - 4)(x - 5),\\
p^{(1)}(x) &= (x - 2)(x - 6)(x - 4),\\
p^{(3)}(x) &= (x - 1)(x - 3)(x - 2),\\
p^{(4)}(x) &= 6(x - 5)(x - 1)(x - 3).
\end{align*}

The transformation matrix $\mathbf{P}$ resulting from this set of polynomials is given by 
\begin{equation*}
\mathbf{P} = 
\begin{bmatrix}
1 & 3 & 1 & 6 \\
1 & 2 & 2 & 1 \\
1 & 4 & 1 & 1 \\
1 & 5 & 2 & 6
\end{bmatrix}.
\end{equation*}

Finally, the balanced and sparsest generator matrix obtained using $\mathbf{P}$ is
\begin{equation*}
\mathbf{G} = 
\begin{bmatrix}
4 & 6 & 3 & 0 & 0 & 0 \\
6 & 3 & 0 & 0 & 0 & 4 \\
0 & 0 & 0 & 4 & 6 & 3 \\
0 & 0 & 4 & 6 & 3 & 0
\end{bmatrix}.
\end{equation*}

\section{Discussion}
As mentioned, the construction presented assumes that $\frac{k}{n}(n - k + 1)$ is an integer, which can be restrictive in certain practical scenarios. Nonetheless, one can think of a greedy algorithm based on heuristics that produces a balanced and sparsest generator matrix for any cyclic Reed--Solomon code. We are currently investigating the correctness of one particular algorithm and the results will be reported in a subsequent paper. In addition, it is desirable to relax the cyclic condition on the underlying $\RS[n,k]_q$ to provide felixibility in choosing its defining set. Note that the technique presented in Lemma~\ref{lem:balanced} does not assume any constraint on $n$. As a result, it can be used to find a balanced matrix $\A_{n,k}$ for any $n$ and $k$ as long as $\frac{k}{n}(n - k + 1)$ is an integer. The difficulty becomes evident when trying to derive an analogue of Lemma~\ref{lem:BCH} for a polynomial with an arbitrary set of roots. While Lemma~\ref{lem:balanced} does provide a way to construct a balanced Reed--Solomon code with required minimum distance, it does not provide any guarantee on its dimension. Furthermore, it is of particular interest to see whether the techniques presented in this paper extend too other families of known MDS codes. The construction presented here can be generalized to handle Gabidulin codes~\cite{gabidulin1985theory}, leveraging the techniques of~\cite{Halbawi2014DGC}. This family of codes is pertinent to error correction in network coding settings~\cite{Silva2008} in which a balanced code construction can prove useful.
\section{Conclusion}
We have studied the problem of constructing linear MDS error-correcting codes of length $n = q - 1$ and dimension $k$ over small fields that have the sparsest possible generator matrices, and each column has exactly the same number of non-zero entries, when permitted by $k$ and $n$. In particular, we have shown that a Reed--Solomon code of the same parameters always possesses a generator matrix that is both sparsest and balanced. As a result, we do not incur any extra cost in the required field size, and the codes can be decoded using any Reed--Solomon decoder. It remains to show whether the assumption $n = q - 1$ can be lifted to determine whether non-cyclic Reed--Solomon codes, or other known MDS codes, also possess a balanced and sparsest generator matrix.
\bibliographystyle{IEEEtran}
\bibliography{IEEEabrv,library}
\end{document}